\documentclass[12pt]{article}

\usepackage{customtemplate}
\newcommand{\qbinom}[2]{\genfrac{[}{]}{0pt}{}{#1}{#2}_q}

\newcommand{\F}{\mathbb{F}}

\newcommand{\X}{\mathbf{X}}

\def\Fq{{\mathbb F}_q}
\def\a{{\alpha}}

\def\AA{{\mathbb A}}

\def\l{{\ell}}
\def\lp{{\l^\prime}}

\def\PP{{\mathbb P}}
\def\clm{{C^\AA(\ell,m)}}

\DeclareMathOperator{\ev}{Ev}

\DeclareMathOperator{\wt}{wt}

\theoremstyle{definition}

\numberwithin{theorem}{section}

\title{Revisiting the Weight Spectrum of the Affine Grassmann Code $C^{\AA}(2,m)$}
\author{Rohit Yadav\footnote{Email: rohityadavau1998@gmail.com\\
Department of Mathematics, Indian Institute of Technology, Jammu, Jammu--181221, India.}}

\date{}

\begin{document}
\maketitle
\begin{abstract}
Affine Grassmann codes, introduced by Beelen, Ghorpade and H{\o}holdt, are linear codes over $\F_q$ obtained by evaluating linear combinations of minors of a generic matrix. The weight spectrum of the affine Grassmann code $C^{\AA}(2,m)$ was determined by Pi\~nero and Singh via a case analysis on the rank of an associated alternating matrix. We give an independent and more streamlined derivation, valid for all $m\ge4$ and every prime power $q$, in which each codeword is written in a compact matrix form and its Hamming weight is expressed through a single closed formula involving two explicit affine subspaces and their intersection.
\end{abstract}

\medskip
\textbf{Keywords:} Affine Grassmann codes, Weight spectrum, Grassmann codes.

\textbf{MSC2020:} 94B27, 14G50

\section{Introduction}\label{sec:intro}

Evaluation codes are linear codes obtained by evaluating a fixed vector space of functions at every point of a fixed set of points. Projective systems \cite{TVN2007} provide a useful general framework for studying such codes. A classical example is the Reed--Muller family, whose codewords are obtained by evaluating polynomials of bounded degree at every point of an affine space. Affine Grassmann codes generalize this construction by evaluating linear combinations of minors of a generic matrix at every point of the corresponding affine space over $\Fq$.

Let $\ell$ and $\ell'$ be positive integers with $\ell\le\ell'$, put $m:=\ell+\ell'$ and $\delta:=\ell\ell'$, and let $\AA^\delta(\Fq)$ denote the $\Fq$-rational points of the affine open cell of the Grassmannian $G_{\ell,m}$ defined by the nonvanishing of a chosen Pl\"ucker coordinate. In the standard local coordinates on this cell, a point of $\AA^\delta(\Fq)$ is naturally identified with an $\ell\times\ell'$ matrix over $\Fq$, since the $\delta$ coordinates are precisely the entries of such a matrix. This identification underlies the entire construction and is used throughout the paper. Under the Pl\"ucker embedding, this cell is itself a projective system $\AA^\delta(\Fq)\subseteq\PP(\Fq)^{\binom{m}{\ell}-1}$ (see \cite{BGH2010}*{Section VII}), and the resulting code, denoted $C^{\AA}(\ell,m)$, is the affine Grassmann code associated with it. When $\ell=1$, the only minors of the generic $1\times\ell'$ matrix are the constant $1$ and its $\ell'$ entries, so $C^{\AA}(1,m)$ is exactly the $q$-ary first-order Reed--Muller code $\mathrm{RM}_q(1,\ell')$. For $\ell\ge2$, the higher-order minors provide genuinely new information, and it is in this sense that affine Grassmann codes generalize first-order Reed--Muller codes (cf. \cite{BGH2010}*{Sec.~1}). Affine Grassmann codes were first studied by Beelen, Ghorpade and H{\o}holdt \cite{BGH2010}, who showed that $C^{\AA}(\ell,m)$ is an $[n,k,d]_q$ code with
\begin{equation}\label{eq:parameters}
n=q^{\delta},\qquad k=\binom{m}{\ell},\qquad\text{and}\qquad
d=q^{\delta-\ell^2}\prod_{i=0}^{\ell-1}\bigl(q^{\ell}-q^{i}\bigr),
\end{equation}
characterized its minimum weight codewords, and showed that its automorphism group is quite large. Its dual $C^{\AA}(\ell,m)^{\perp}$ was studied in a subsequent article \cite{BGT2012}, where it was described as an evaluation code of certain functions on the set of $\ell\times\ell'$ matrices.

The automorphism group of $C^{\AA}(\ell,m)$ was later determined completely by Ghorpade and Kaipa \cite{GK2013}, and Datta and Ghorpade \cite{DG2015} computed some of its initial and terminal generalized Hamming weights. A majority logic decoder for these codes over nonbinary fields, building on earlier work on Grassmann codes \cite{BPP2021} and Schubert codes \cite{S2022}, was proposed in \cite{PPR2026}.

Among the invariants of a code, its weight spectrum, which records the weight of every codeword together with the number of codewords of each weight, is particularly informative. It determines the weight spectrum of the dual code via the MacWilliams identities, and it governs the probability of a decoding error on a memoryless channel \cite{MS1977}. It also refines the minimum distance, since it records how many codewords lie at every distance, rather than only the smallest distance. For affine Grassmann codes, this invariant is known only when $\ell=2$. Pi\~nero and Singh \cite{PS2019} showed that a codeword of $C^{\AA}(2,m)$ corresponds to an alternating matrix whose rank governs its weight, and determined the spectrum through a case analysis on that rank. The first case with $\ell\ge3$, namely $C^{\AA}(3,6)$, was subsequently determined by the present author and Prasant Singh \cite{SinghYadav2026}. Beyond that, the weight spectrum of $C^{\AA}(\ell,m)$ remains open.

In this paper, we revisit the case $\ell=2$ and give an independent proof of its weight spectrum. We replace the case analysis of \cite{PS2019} with a single closed formula expressing the Hamming weight of a codeword in terms of the sizes of two explicit affine subspaces of $\Fq^{\ell'}$ and their intersection. In this formula, the zero codeword and the nonzero constant codewords appear as degenerate instances, rather than as cases requiring separate treatment, and every step works uniformly over every prime power $q$, without requiring a separate argument in characteristic two. The uniform treatment is also useful beyond the case $\ell=2$. Specialising one row of the generic matrix carries a codeword of $C^{\AA}(\ell,m)$ to a codeword of $C^{\AA}(\ell-1,m-1)$, and the closed formula proved here is the key ingredient in that reduction when $\ell=3$ (see \cite{SinghYadav2026}).

This article is organized as follows. Section~\ref{sec:weight-dist} recalls the general construction of affine Grassmann codes and fixes the notation used throughout, and then contains our main results. Theorem~\ref{Theo_Weight_2m} gives the unified weight formula for codewords of $C^{\AA}(2,m)$, Theorem~\ref{thm:count-2m} assembles the complete weight distribution, and we conclude with a worked example.

\section{Weight Enumeration of \texorpdfstring{$C^{\mathbb A}(2,m)$}{CA(2,m)}}
\label{sec:weight-dist}

We first recall the general construction of the affine Grassmann code $\clm$ (see \cite{BGH2010} for details). Let $\X=(X_{ij})$ denote the generic $\ell\times\ell'$ matrix, whose entries $X_{ij}$ are independent variables, and let $\mathcal F(\ell,m)$ be the $\Fq$-linear span of all minors of $\X$, where the $0\times0$ minor is taken to be the constant $1$. Then $\mathcal F(\ell,m)$ is a $\binom{m}{\ell}$-dimensional subspace of $\Fq[\X]$. As noted in the introduction, a point of $\AA^\delta(\Fq)$ is naturally identified with an $\ell\times\ell'$ matrix over $\Fq$, obtained by assigning a value in $\Fq$ to each of the $\delta$ entries $X_{ij}$, and evaluating a minor $f\in\mathcal F(\ell,m)$ at such a point simply means substituting these values into the polynomial $f$. In other words, the minors of the generic matrix $\X$ are evaluated, in the classical sense, at every $\ell\times\ell'$ matrix over $\Fq$.

Choose an enumeration $P_1,\dots,P_{q^\delta}$ of $\AA^\delta(\Fq)$, that is, of the $\ell\times\ell'$ matrices over $\Fq$. The evaluation map
\[
\ev:\mathcal F(\ell,m)\longrightarrow\Fq^{q^\delta},\qquad
f\longmapsto\bigl(f(P_1),\dots,f(P_{q^\delta})\bigr)
\]
is $\Fq$-linear and injective \cite{BGH2010}, and its image is the code $\clm$ described in the introduction. For $f\in\mathcal F(\ell,m)$ we write $c_f:=\ev(f)$, so the Hamming weight of $c_f$ is simply the number of matrices $P\in\AA^\delta(\Fq)$ at which $f(P)$ is nonzero.

The weight distribution of $C^{\AA}(2,m)$ was first found by Pi\~nero and Singh \cite{PS2019}. We give an independent derivation below, which reduces the weight of a codeword to a simple point count involving two affine subspaces of $\Fq^{\lp}$ and their intersection.

From now on, we fix $\ell=2\le \lp$ and $m=2+\lp$, so $\delta=2\lp$. By \eqref{eq:parameters}, $C^{\AA}(2,m)$ then has length $q^{2\lp}$ and dimension $\binom{m}{2}$. When $\ell=2$, the matrix $\X$ has only two rows, so its minors reduce to the entries $X_{ij}$ themselves and the $2\times2$ minors $X_{1j_1}X_{2j_2}-X_{1j_2}X_{2j_1}$, and every $f\in\mathcal F(2,m)$ therefore has a unique expression
\[
f(X)= c_0+ \sum_{j=1}^{\lp} a_{1j}X_{1j} + \sum_{j=1}^{\lp} a_{2j}X_{2j}
+ \sum_{1\le j_1 < j_2\le \lp} b_{j_1j_2}
\bigl(X_{1j_1}X_{2j_2}-X_{1j_2}X_{2j_1}\bigr),
\]
with $c_0,a_{ij},b_{j_1j_2}\in\Fq$.

To these coefficients, we attach the vectors
\[
\a_1:=(a_{11},\ldots,a_{1\lp}),\qquad
\a_2:=(a_{21},\ldots,a_{2\lp})^{t},
\]
and the $\lp\times\lp$ matrix $B=(B_{jk})$ with
\[
B_{j_2j_1}:=b_{j_1j_2},\qquad B_{j_1j_2}:=-b_{j_1j_2}\quad (j_1<j_2),
\qquad B_{jj}:=0 .
\]
Thus, $B$ is \emph{alternating}, meaning that $B^{t}=-B$ and all diagonal entries are zero. (In characteristic two, $B^{t}=-B$ alone would not force the diagonal to vanish. Here it does so automatically, since the entries of $B$ come directly from the $2\times2$ minors.)

Let $X_1 := (X_{11},\ldots,X_{1\lp})^{t}$ and $X_2 := (X_{21},\ldots,X_{2\lp})$ denote the transpose of the first row and the second row of $\X$, respectively. A direct expansion of $X_2BX_1=\sum_{j,k}X_{2j}B_{jk}X_{1k}$ then gives
\begin{equation}\label{Rep_f}
f(X_1,X_2)= c_0 + \a_1 X_1 + X_2 \a_2 + X_2 B X_1 .
\end{equation}
Conversely, for every choice of $c_0\in\Fq$, $\a_1,\a_2\in\Fq^{\lp}$, and an alternating $B$, there exists a unique $f$ giving rise to these parameters, as is confirmed by the parameter count
\[
1+2\lp+\binom{\lp}{2}=\binom{2+\lp}{2}=\binom{m}{2}=\dim\mathcal F(2,m).
\]

Since a point of $\AA^\delta(\Fq)$ is, for $\ell=2$, a matrix with exactly two rows, evaluating $f$ at such a point amounts to assigning values to $X_1$ and $X_2$, and we may therefore identify the evaluation points of $\AA^{\delta}(\Fq)$ with pairs $(u,v)\in\Fq^{\lp}\times\Fq^{\lp}$, where $u$ is the transpose of the first row and $v$ is the second row of the matrix. By \eqref{Rep_f},
\begin{equation}\label{eq:fuv}
f(u,v)= \bigl(c_0+\a_1u\bigr) + v\bigl(\a_2+Bu\bigr).
\end{equation}
Thus, for fixed $u$, the function $v\mapsto f(u,v)$ is affine in $v$, with linear part $v\mapsto v(\a_2+Bu)$ and constant term $c_0+\a_1u$. This observation is the key to the weight computation.

We now distinguish the values of $u$ for which the linear part vanishes from those for which the constant term vanishes.
\[
U:=\{u\in\Fq^{\lp}: \a_2+Bu=0\},\qquad
V:=\{u\in\Fq^{\lp}: c_0+\a_1u=0\}.
\]
Both sets are affine subspaces of $\Fq^{\lp}$, although either one may be empty. Here $B$ gives a linear map $u\mapsto Bu$ from $\Fq^{\lp}$ to itself, while $\a_1$ gives a linear functional $u\mapsto \a_1u$ on $\Fq^{\lp}$. Put
\[
U_0:=\ker B,\qquad V_0:=\ker \a_1,\qquad N:=\#U,\qquad M:=\#(U\cap V).
\]
Since $B$ is alternating, its rank is even, say $\operatorname{rank}B=2r$ with $0\le 2r\le \lp$, so $\dim U_0=\lp-2r$. The next lemma determines the possible values of $N$ and $M$.

\begin{lemma}\label{lem:NM}
With the notation above, the following hold.
\begin{enumerate}
\item[(1)] We have $U\neq\emptyset$ if and only if $-\a_2\in\operatorname{Im}B$. In
that case $U=u_2+U_0$ for any $u_2$ with $Bu_2=-\a_2$, so
\[
N=\begin{cases} 0,& -\a_2\notin\operatorname{Im}B,\\ q^{\lp-2r},&\text{otherwise.}\end{cases}
\]
\item[(2)] If $\a_1=0$, then $M=0$ when $c_0\neq0$, and $M=N$ when $c_0=0$.
\item[(3)] Let $\a_1\neq0$ and $U\neq\emptyset$, and fix $u_2$ as in (1). Then
$U\cap V\neq\emptyset$ if and only if $c_0+\a_1u_2=0$, and in that case
$M=q^{\dim(U_0\cap V_0)}$, where
\[
\dim(U_0\cap V_0)=\begin{cases}\lp-2r, & U_0\subseteq V_0,\\ \lp-2r-1,& U_0\nsubseteq V_0.\end{cases}
\]
Hence $M$ takes values in $\{0,\;q^{\lp-2r-1},\;q^{\lp-2r}\}$, and the last value occurs only when
$U_0\subseteq V_0$.
\end{enumerate}
\end{lemma}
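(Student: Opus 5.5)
The lemma is elementary linear algebra, and I would prove its three parts in order, treating $U$ and $V$ as solution sets of linear systems.

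For (1), $U$ is the solution set of $Bu=-\a_2$. This set is nonempty exactly when $-\a_2\in\operatorname{Im}B$. When it is nonempty, it is the coset $u_2+\ker B=u_2+U_0$ for any particular solution $u_2$. By rank–nullity, $\dim U_0=\lp-\operatorname{rank}B=\lp-2r$, which gives the stated value of $N$. The fact that $\operatorname{rank}B$ is even I take as the standard fact about alternating matrices already recalled before the lemma.

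For (2), if $\a_1=0$ then the defining equation of $V$ becomes $c_0=0$. Hence $V=\emptyset$ when $c_0\neq0$ and $V=\Fq^{\lp}$ when $c_0=0$, so $M=\#(U\cap V)$ equals $0$ or $\#U=N$ respectively.

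For (3), assume $\a_1\neq0$ and $U=u_2+U_0\neq\emptyset$. The key step is to restrict the affine functional $\phi(u)=c_0+\a_1u$ to $U$. Writing $u=u_2+w$ with $w\in U_0$ gives
\[
\phi(u)=\bigl(c_0+\a_1u_2\bigr)+\a_1w .
\]
So $U\cap V$ consists of the points $u_2+w$ with $w\in U_0$ and $\a_1w=-(c_0+\a_1u_2)$.

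If $U_0\subseteq V_0$, then $\a_1w=0$ on $U_0$. In that case $U\cap V$ is either all of $U$ (when $c_0+\a_1u_2=0$), of size $q^{\lp-2r}$, or empty.

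If $U_0\nsubseteq V_0$, then $\a_1|_{U_0}$ is a nonzero functional on $U_0$, hence surjective onto $\Fq$. The equation then always has solutions, and they form a coset of $U_0\cap V_0$, which has dimension $\lp-2r-1$.

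The statement as written says $U\cap V\neq\emptyset$ iff $c_0+\a_1u_2=0$. My computation shows this biconditional holds only in the subcase $U_0\subseteq V_0$. When $U_0\nsubseteq V_0$, the intersection is always nonempty, but one can still choose $u_2$ so that $c_0+\a_1u_2=0$, since $\phi$ is nonconstant on $U$. So I would handle this by noting that "for some choice of $u_2$" makes the criterion uniform. Equivalently, $U\cap V\ne\emptyset$ iff $\phi$ vanishes somewhere on $U$, and in that case, choosing $u_2\in U\cap V$, we get $U\cap V=u_2+(U_0\cap V_0)$. This identity is the cleanest formulation.

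The dimension claim then follows from the fact that $U_0\cap V_0$ is the kernel of $\a_1|_{U_0}$. This kernel is all of $U_0$ when $U_0\subseteq V_0$, and a hyperplane in $U_0$ otherwise. Listing the cases gives $M\in\{0,q^{\lp-2r-1},q^{\lp-2r}\}$, with the last value only when $U_0\subseteq V_0$.

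The main obstacle is this quantifier subtlety about $u_2$ in (3). I would resolve it by choosing $u_2\in U\cap V$ whenever that set is nonempty, rather than fixing $u_2$ arbitrarily.
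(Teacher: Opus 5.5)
Your proof is correct and follows essentially the paper's argument. Both use the coset description $U=u_2+U_0$, the case split on whether $U_0\subseteq V_0$, and the final step $U\cap V=p+(U_0\cap V_0)$ for a chosen $p\in U\cap V$; the paper tests nonemptiness via $u_1-u_2\in U_0+V_0$ where you restrict $c_0+\a_1u$ to $U$. Your quantifier remark is also accurate: for an arbitrary $u_2$ the stated biconditional holds only when $U_0\subseteq V_0$. The paper's own proof likewise only shows that $U\cap V\neq\emptyset$ automatically when $U_0\nsubseteq V_0$, which is all that Theorem~\ref{thm:count-2m} actually uses.
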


\begin{proof}
(1) The condition $\a_2+Bu=0$ says $Bu=-\a_2$. This is solvable if and only if $-\a_2$ lies in the image of the linear map $u\mapsto Bu$. The solution set of an inhomogeneous linear system, when nonempty, is a coset of the kernel. So $U=u_2+U_0$ and $N=\#U_0=q^{\lp-2r}$.

(2) If $\a_1=0$, then $V=\emptyset$ for $c_0\neq0$, and $V=\Fq^{\lp}$ for $c_0=0$. Both claims are immediate.

(3) Let $\a_1\neq0$, so $V=u_1+V_0$ with $\a_1u_1=-c_0$ and $\dim V_0=\lp-1$. A point $p\in U\cap V$ can be written as $p=u_2+x=u_1+y$ with $x\in U_0$ and $y\in V_0$. Thus, $U\cap V\neq\emptyset$ if and only if $u_1-u_2\in U_0+V_0$. Since $V_0$ is a hyperplane, there are two possibilities for the relative position of $U_0$ and $V_0$. If $U_0\subseteq V_0$, then $U_0+V_0=V_0$. If $U_0\nsubseteq V_0$, then $U_0+V_0=\Fq^{\lp}$, so the condition holds automatically. In the first case, $u_1-u_2\in V_0$ is equivalent to $\a_1(u_1-u_2)=0$, or equivalently, $c_0+\a_1u_2=0$. In the second case, $U_0\nsubseteq V_0$ implies $\a_1(U_0)=\Fq$, so $U\cap V$ is always nonempty.

In either case, choose $p\in U\cap V$. Then $U=p+U_0$ and $V=p+V_0$, because a coset is unchanged when its representative is replaced by any of its elements. Hence $U\cap V=p+(U_0\cap V_0)$ and $M=q^{\dim(U_0\cap V_0)}$. Finally, $U_0\cap V_0$ is the intersection of $U_0$ with the hyperplane $V_0$. So its dimension is $\dim U_0$ if $U_0\subseteq V_0$, and $\dim U_0-1$ otherwise.
\end{proof}

\begin{theorem}\label{Theo_Weight_2m}
Let $f=c_0+\a_1X_1+X_2\a_2+X_2BX_1\in\mathcal F(2,m)$ and let $c_f=\ev(f)$. With $N=\#U$ and $M=\#(U\cap V)$ as above,
\begin{equation}\label{eq:wt-unified}
\wt(c_f)=q^{2\lp}-q^{2\lp-1}+N\,q^{\lp-1}-M\,q^{\lp} .
\end{equation}
Explicitly, if $\operatorname{rank}B=2r$, then $\wt(c_f)$ takes one of the three values
\[
\wt(c_f)=
\begin{cases}
q^{2\lp}-q^{2\lp-1}, & N=0,\ \text{or } M=q^{\lp-2r-1},\\[2pt]
q^{2\lp}-q^{2\lp-1}+q^{2\lp-2r-1}, & N\neq0,\ M=0,\\[2pt]
q^{2\lp}-q^{2\lp-1}-q^{2\lp-2r}+q^{2\lp-2r-1}, & N\neq0,\ M=q^{\lp-2r}.
\end{cases}
\]
\end{theorem}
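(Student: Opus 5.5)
The plan is to count zeros of $f$ fibrewise over $u\in\Fq^{\lp}$, using \eqref{eq:fuv}, and then subtract from $q^{2\lp}$. Fix $u$ and look at the affine function $v\mapsto(c_0+\a_1u)+v(\a_2+Bu)$ on $\Fq^{\lp}$. There are three cases:
\begin{itemize}
\item If $u\notin U$, the linear part is a nonzero functional. Its zero set is then an affine hyperplane, so exactly $q^{\lp-1}$ values of $v$ give $f(u,v)=0$.
\item If $u\in U\setminus V$, the function is a nonzero constant and has no zeros.
\item If $u\in U\cap V$, the function vanishes identically, giving $q^{\lp}$ zeros.
\end{itemize}
Summing over $u$, the number of zeros of $c_f$ is
\[
Z=(q^{\lp}-N)\,q^{\lp-1}+M\,q^{\lp}.
\]
Hence $\wt(c_f)=q^{2\lp}-Z=q^{2\lp}-q^{2\lp-1}+Nq^{\lp-1}-Mq^{\lp}$, which is \eqref{eq:wt-unified}. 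This part is routine. The only care needed is that $U$ and $V$ may be empty, which simply makes $N$ or $M$ zero. No case split is needed here, and the zero and constant codewords are covered automatically.

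For the explicit list I will substitute the possible values of $N$ and $M$ from Lemma~\ref{lem:NM}.
\begin{itemize}
\item If $N=0$, then $M=0$ because $U\cap V\subseteq U$, and the weight is $q^{2\lp}-q^{2\lp-1}$.
\item If $N\neq0$, then $N=q^{\lp-2r}$ by Lemma~\ref{lem:NM}(1). By parts (2) and (3), $M\in\{0,q^{\lp-2r-1},q^{\lp-2r}\}$; in case (2), $M$ is either $0$ or $N$.
\end{itemize}
Plugging in the three values of $M$ gives:
\begin{itemize}
\item $M=q^{\lp-2r-1}$: the terms $Nq^{\lp-1}=q^{2\lp-2r-1}$ and $Mq^{\lp}=q^{2\lp-2r-1}$ cancel, leaving $q^{2\lp}-q^{2\lp-1}$.
\item $M=0$: the weight is $q^{2\lp}-q^{2\lp-1}+q^{2\lp-2r-1}$.
\item $M=q^{\lp-2r}$: the weight is $q^{2\lp}-q^{2\lp-1}+q^{2\lp-2r-1}-q^{2\lp-2r}$.
\end{itemize}
These match the three displayed cases.

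The only delicate point is checking that the cases in the display are exhaustive and consistent. When $N\neq0$, $M$ can take only the three listed values, and the value $q^{\lp-2r-1}$ requires $\a_1\ne0$ and $U_0\nsubseteq V_0$. In particular, $\lp-2r\ge1$ holds whenever $q^{\lp-2r-1}$ occurs, since $U_0\nsubseteq V_0$ forces $U_0\neq0$. So all exponents are legitimate. Everything else is direct arithmetic, and I expect no real obstacle beyond this bookkeeping.
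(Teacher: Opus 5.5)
Your proposal is correct and follows the paper's proof essentially verbatim. The paper also fibres over $u$ using \eqref{eq:fuv}, tallying nonzeros $(q^{\lp}-N)(q^{\lp}-q^{\lp-1})+q^{\lp}(N-M)$ where you tally zeros and take the complement, and then substitutes the values of $N$ and $M$ from Lemma~\ref{lem:NM}; your extra check that $M=q^{\lp-2r-1}$ forces $U_0\neq0$ (so $\lp-2r\ge1$) is a small bookkeeping point the paper leaves implicit.
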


\begin{proof}
We split the sum defining the weight according to whether $u\in U$:
\begin{equation}\label{Weight_eq}
\wt(c_f)=\sum_{u\in \Fq^{\lp} \setminus U}\#\{v \in \Fq^{\lp}: f(u,v)\neq0\}
        +\sum_{u\in \Fq^{\lp}\cap U}\#\{v \in \Fq^{\lp}: f(u,v)\neq0\}.
\end{equation}
Let $u\notin U$. Then $\a_2+Bu\neq0$, so by \eqref{eq:fuv} the map $v\mapsto f(u,v)$ is a nonconstant affine function of $v$. Its zero set is a coset of the hyperplane $\{v \in \Fq^{\lp} : v(\a_2+Bu)=0\}$, so it has $q^{\lp-1}$ elements. Hence
\[
\#\{v \in \Fq^{\lp}: f(u,v)\neq0\}=q^{\lp}-q^{\lp-1}\qquad (u\notin U),
\]
and the contribution from $u\notin U$ is $(q^{\lp}-N)(q^{\lp}-q^{\lp-1})$.

Now let $u\in U$. Then $\a_2+Bu=0$, so \eqref{eq:fuv} reduces to the constant $f(u,v)=c_0+\a_1u$, which does not depend on $v$. Therefore, $\#\{v \in \Fq^{\lp}:f(u,v)\neq0\}$ equals $q^{\lp}$ if $u\notin V$, and $0$ if $u\in V$. Hence the contribution from $u\in U$ is $q^{\lp}\cdot\#(U\setminus V)=q^{\lp}(N-M)$.

Adding the two contributions,
\[
\wt(c_f)=(q^{\lp}-N)(q^{\lp}-q^{\lp-1})+q^{\lp}(N-M)
=q^{2\lp}-q^{2\lp-1}+Nq^{\lp-1}-Mq^{\lp},
\]
which is \eqref{eq:wt-unified}. Substituting the possible values of $N$ and $M$ from Lemma~\ref{lem:NM} gives the explicit list. If $N=0$, then $M=0$, so the formula gives $q^{2\lp}-q^{2\lp-1}$. If $N=q^{\lp-2r}$, then $M\in\{0,q^{\lp-2r-1},q^{\lp-2r}\}$. The three substitutions give $q^{2\lp}-q^{2\lp-1}+q^{2\lp-2r-1}$, then $q^{2\lp}-q^{2\lp-1}+q^{2\lp-2r-1}-q^{2\lp-2r-1}=q^{2\lp}-q^{2\lp-1}$, and finally $q^{2\lp}-q^{2\lp-1}+q^{2\lp-2r-1}-q^{2\lp-2r}$.
\end{proof}

\begin{remark}
Formula \eqref{eq:wt-unified} also covers the two degenerate codewords, so they need no separate discussion. If $f=0$, then $B=0$, $\a_2=0$, $\a_1=0$ and $c_0=0$, so $U=V=\Fq^{\lp}$ and $N=M=q^{\lp}$, which gives $\wt(c_f)=0$. If $f=c_0$ is a nonzero constant codeword, then $U=\Fq^{\lp}$ and $V=\emptyset$, so $N=q^{\lp}$ and $M=0$, which gives $\wt(c_f)=q^{2\lp}$. Both cases are included in the $r=0$ part of the explicit list in Theorem~\ref{Theo_Weight_2m}.
\end{remark}

For $0\le 2r\le\lp$, set
\[
s_r:=q^{r(r-1)}\prod_{i=1}^{r}\bigl(q^{2i-1}-1\bigr),
\qquad
n_{2r}:=\qbinom{\lp}{2r}s_r,
\qquad
k_{2r}:=\qbinom{\lp-1}{\lp-2r}s_r ,
\]
where $\qbinom{a}{b}$ denotes the Gaussian binomial coefficient. The next lemma identifies these three quantities. Recall that a subspace $W\subseteq\Fq^{\lp}$ of dimension $\lp-2r$ is the kernel of some alternating matrix of rank $2r$.

\begin{lemma}\label{lem:altcount}
The following hold.
\begin{enumerate}
\item[(1)] The quantity $s_r$ counts the nondegenerate alternating
$2r\times2r$ matrices over $\Fq$. Equivalently, for a fixed subspace $W \in \Fq^{\lp}$ of dimension $\lp-2r$, it is the number of alternating $\lp\times\lp$ matrices $B$ of rank $2r$
with $\ker B=W$.
\item[(2)] The quantity $n_{2r}$ counts the alternating $\lp\times\lp$ matrices of
rank $2r$, and $\sum_{r}n_{2r}=q^{\binom{\lp}{2}}$.
\item[(3)] For a fixed nonzero $\a_1$, the quantity $k_{2r}$ counts the
alternating $\lp\times\lp$ matrices $B$ of rank $2r$ with $\ker B\subseteq\ker\a_1$.
\end{enumerate}
\end{lemma}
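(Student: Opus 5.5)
The plan is to prove (1) first, since (2) and (3) then follow by counting kernels.

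For the count of nondegenerate alternating $2r\times2r$ matrices, I would use induction on $r$ via the standard symplectic-basis argument. The most direct route goes through $\mathrm{GL}_{2r}(\Fq)$. This group acts transitively on nondegenerate alternating forms, because every such form has a symplectic basis, and this holds in every characteristic. The stabilizer of a form is $\mathrm{Sp}_{2r}(\Fq)$, of order $q^{r^2}\prod_{i=1}^r(q^{2i}-1)$. Dividing, with $|\mathrm{GL}_{2r}(\Fq)|=q^{r(2r-1)}\prod_{j=1}^{2r}(q^j-1)$, leaves
\[
q^{r(2r-1)-r^2}\prod_{i=1}^{r}(q^{2i-1}-1)=q^{r(r-1)}\prod_{i=1}^r(q^{2i-1}-1)=s_r .
\]

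For the equivalence with fixed kernel $W$, I would choose a complement $W'$ of $W$, of dimension $2r$. An alternating $B$ with $\ker B=W$ is the same as an alternating form on $\Fq^{\lp}$ with radical exactly $W$. Such forms correspond bijectively to nondegenerate alternating forms on $\Fq^{\lp}/W\cong W'$, via descent of the bilinear form $(x,y)\mapsto y^tBx$. This gives $s_r$ again. The one point needing care is that "alternating" (zero diagonal) is preserved in characteristic two, i.e.\ $x^tBx=0$ for all $x$. This holds because the alternating property is basis independent and passes to quotients.

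For (2), I would partition the rank-$2r$ alternating matrices by their kernel, which is an $(\lp-2r)$-dimensional subspace. There are $\qbinom{\lp}{\lp-2r}=\qbinom{\lp}{2r}$ such subspaces, and by (1) each contributes $s_r$ matrices, so the count is $n_{2r}$. The identity $\sum_r n_{2r}=q^{\binom{\lp}{2}}$ then holds because alternating matrices are determined by their $\binom{\lp}{2}$ strictly upper entries and every rank is even.

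For (3), $V_0=\ker\a_1$ is a hyperplane, so I would count the $(\lp-2r)$-dimensional subspaces $W\subseteq V_0$. There are $\qbinom{\lp-1}{\lp-2r}$ of them, and multiplying by $s_r$ from (1) gives $k_{2r}$.

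The main obstacle is step (1): the order of $\mathrm{Sp}_{2r}$ and the transitivity claim. If these are not simply quoted, I would instead count directly by induction. Choose the first symplectic pair $(e,f)$ and observe that the form on $\langle e,f\rangle^\perp$ is again nondegenerate. Counting matrices directly gives the recursion $s_r=q^{2r-2}(q^{2r-1}-1)\,s_{r-1}$. This comes from the first row: it is a nonzero vector with first entry $0$, giving $q^{2r-1}-1$ choices. Conditioning on it then fixes a one-dimensional choice and a reduction in which the remaining $2r-2$ coordinates carry a nondegenerate alternating form, with $q^{2r-2}$ free parameters. Verifying this recursion cleanly is where I expect the real work.
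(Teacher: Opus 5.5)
Your proposal is correct and follows the paper's proof. Part (1) uses the bijection between alternating forms with radical $W$ and nondegenerate alternating forms on $\Fq^{\lp}/W$. Parts (2) and (3) count the admissible kernels, $\qbinom{\lp}{2r}$ in $\Fq^{\lp}$ and $\qbinom{\lp-1}{\lp-2r}$ inside the hyperplane $\ker\a_1$. The only difference is that you derive the classical count $s_r$ (as $|\mathrm{GL}_{2r}(\Fq)|/|\mathrm{Sp}_{2r}(\Fq)|$, or via the recursion $s_r=q^{2r-2}(q^{2r-1}-1)s_{r-1}$), whereas the paper simply quotes it.
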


\begin{proof}
(1) An alternating matrix $B$ of rank $2r$ with kernel $W$ corresponds to a nondegenerate alternating form on the $2r$-dimensional quotient $\Fq^{\lp}/W$. The number of nondegenerate alternating $2r\times2r$ matrices is given by the classical formula $q^{r(r-1)}\prod_{i=1}^{r}(q^{2i-1}-1)$.

(2) To choose an alternating matrix $B$ of rank $2r$, we first choose its kernel, a subspace of dimension $\lp-2r$, and then choose a nondegenerate alternating form on the quotient. There are $\qbinom{\lp}{\lp-2r}=\qbinom{\lp}{2r}$ choices for the kernel, so $n_{2r}=\qbinom{\lp}{2r}s_r$ by (1). Every alternating matrix has even rank, so summing over all admissible values of $r$ counts all alternating matrices.

(3) The kernel $\ker\a_1$ is a hyperplane, so it has dimension $\lp-1$. By the same argument, the number of admissible kernels is $\qbinom{\lp-1}{\lp-2r}$.
\end{proof}

For $0\le 2r\le \lp$, write
\[
W_r^{-}:=q^{2\lp}-q^{2\lp-1}-q^{2\lp-2r}+q^{2\lp-2r-1},
\qquad
W_r^{+}:=q^{2\lp}-q^{2\lp-1}+q^{2\lp-2r-1},
\]
and $W_0^{\ast}:=q^{2\lp}-q^{2\lp-1}$, and let $A_t$ denote the number of codewords of
$C^{\AA}(2,m)$ of weight $t$. Combining Theorem~\ref{Theo_Weight_2m} with
Lemma~\ref{lem:altcount} now gives the complete weight distribution of $C^{\AA}(2,m)$.

\begin{theorem}\label{thm:count-2m}
The weights $W_r^{-}$, $W_r^{+}$, and $W_0^{\ast}$, for $0\le 2r\le\lp$, are pairwise
distinct, and together they exhaust the weights of $C^{\AA}(2,m)$. Moreover,
\[
A_{W_r^{-}}=q^{2r}\Bigl(n_{2r}+(q^{\lp}-1)\,k_{2r}\Bigr), \qquad
A_{W_r^{+}}=(q-1)\,A_{W_r^{-}}, \qquad
A_{W_0^{\ast}}=q^{\binom{m}{2}}-\sum_{r=0}^{\lfloor \lp/2\rfloor} q\,A_{W_r^{-}} .
\]
In particular, $r=0$ gives $W_0^{-}=0$ with $A_0=1$, and $W_0^{+}=q^{2\lp}$ with
$A_{q^{2\lp}}=q-1$.
\end{theorem}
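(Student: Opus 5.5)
The plan is to count parameter tuples $(c_0,\a_1,\a_2,B)$ directly. Since $\ev$ is injective, codewords correspond bijectively to such tuples, with $B$ alternating, and there are $q^{\binom{m}{2}}$ of them in total. By Theorem~\ref{Theo_Weight_2m}, every weight is one of $W_r^{-}$, $W_r^{+}$ or $W_0^{\ast}$ with $2r=\operatorname{rank}B$. This already gives the exhaustion claim.

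\textbf{Distinctness.} I would write
\[
W_r^{-}=W_0^{\ast}-(q-1)q^{2\lp-2r-1},\qquad W_r^{+}=W_0^{\ast}+q^{2\lp-2r-1}.
\]
The offsets $(q-1)q^{2\lp-2r-1}$ are positive and strictly decrease in $r$. Hence the $W_r^{-}$ are pairwise distinct and all lie below $W_0^{\ast}$. Likewise the $W_r^{+}$ are pairwise distinct and all lie above $W_0^{\ast}$. So no two of the listed weights coincide.

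\textbf{Counting weight $W_r^{-}$.} I fix $r$ and count tuples with $\operatorname{rank}B=2r$, $N\neq0$ and $M=q^{\lp-2r}$.
\begin{enumerate}
\item The condition $N\neq0$ means $-\a_2\in\operatorname{Im}B$. For each $B$ of rank $2r$ this allows exactly $q^{2r}$ vectors $\a_2$.
\item Given such $B$ and $\a_2$, fix $u_2$ with $Bu_2=-\a_2$. By Lemma~\ref{lem:NM}, $M=N$ happens in exactly two ways.
\begin{itemize}
\item If $\a_1=0$, we need $c_0=0$. By Lemma~\ref{lem:altcount}(2) this contributes $n_{2r}$ matrices $B$.
\item If $\a_1\ne0$, we need $U_0\subseteq V_0$, i.e.\ $\ker B\subseteq\ker\a_1$, and $c_0=-\a_1u_2$. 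This value is independent of the choice of $u_2$ precisely because $\a_1$ vanishes on $U_0$, so $c_0$ is uniquely determined. By Lemma~\ref{lem:altcount}(3) this contributes $(q^{\lp}-1)k_{2r}$ pairs $(\a_1,B)$.
\end{itemize}
\end{enumerate}
Multiplying by the $q^{2r}$ choices of $\a_2$ gives
\[
A_{W_r^{-}}=q^{2r}\bigl(n_{2r}+(q^{\lp}-1)k_{2r}\bigr).
\]

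\textbf{Counting weight $W_r^{+}$.} Here $N\ne0$ and $M=0$. The choices of $(\a_1,\a_2,B)$ are exactly the same as above, because Lemma~\ref{lem:NM} shows that $M=0$ with $N\neq0$ forces either $\a_1=0$, or $\a_1\neq0$ with $U_0\subseteq V_0$. In both sub-cases, $c_0$ now ranges over the $q-1$ values different from the single admissible value ($0$, respectively $-\a_1u_2$). Hence $A_{W_r^{+}}=(q-1)A_{W_r^{-}}$.

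\textbf{Counting weight $W_0^{\ast}$.} All remaining tuples have weight $W_0^{\ast}$. Subtracting $\sum_r\bigl(A_{W_r^{-}}+A_{W_r^{+}}\bigr)=\sum_r q\,A_{W_r^{-}}$ from $q^{\binom m2}$ gives the stated formula.

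\textbf{The case $r=0$.} Here $B=0$, $\a_2=0$, $n_0=1$ and $k_0=\qbinom{\lp-1}{\lp}=0$. So $A_{W_0^{-}}=1$, with $W_0^{-}=0$. Also $A_{W_0^{+}}=q-1$, with $W_0^{+}=q^{2\lp}$.

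\textbf{Main obstacle.} The delicate point is the bookkeeping in the $W_r^{\pm}$ counts. One has to check that the condition $U_0\subseteq V_0$ makes $c_0=-\a_1u_2$ well defined. One also has to check that the case $U_0\nsubseteq V_0$, where $M=q^{\lp-2r-1}$, is correctly excluded and lands in $W_0^{\ast}$. Finally, the edge case $2r=\lp$ needs a check: there $\ker B=0$ is contained in every hyperplane, and $k_{2r}$ must equal $s_r$, which holds since $\qbinom{\lp-1}{0}=1$.
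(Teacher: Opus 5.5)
Your proposal is correct and follows essentially the same route as the paper. You prove distinctness via $W_r^{-}<W_0^{\ast}<W_r^{+}$ with offsets decreasing in $r$, restrict to the $q^{2r}$ admissible $\a_2$ for each $B$ of rank $2r$, split into $\a_1=0$ and $\a_1\neq0$ with $\ker B\subseteq\ker\a_1$ (one value of $c_0$ giving $W_r^{-}$, the other $q-1$ giving $W_r^{+}$), and obtain $A_{W_0^{\ast}}$ by subtraction, exactly as the paper does.
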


\begin{proof}
We have $W_r^{-}<W_0^{\ast}<W_r^{+}$ for every $r$. Also $W_r^{\pm}$ determines $r$, because the exponents $2\lp-2r$ and $2\lp-2r-1$ are strictly decreasing in $r$. So the listed weights are pairwise distinct, and by Theorem~\ref{Theo_Weight_2m} every codeword has one of these weights. It remains to count the codewords of each weight.

Fix $r$ and an alternating $B$ of rank $2r$. By Lemma~\ref{lem:NM}(1), we have $U\neq\emptyset$ for exactly $q^{2r}$ of the $q^{\lp}$ vectors $\a_2$, namely those with $-\a_2\in\operatorname{Im}B$. For the remaining $q^{\lp}-q^{2r}$ vectors, we have $U=\emptyset$. When $N=0$, the weight is necessarily $W_0^{\ast}$. We therefore restrict to the $q^{2r}$ vectors $\a_2$ for which $U\neq\emptyset$ and count the codewords of weights $W_r^{-}$ and $W_r^{+}$. By Theorem~\ref{Theo_Weight_2m}, these are the codewords with $M=q^{\lp-2r}$ and with $M=0$ respectively. The remaining possibility, $M=q^{\lp-2r-1}$, also gives the weight $W_0^{\ast}$.

First, let $\a_1=0$. By Lemma~\ref{lem:NM}(2), we get $M=N=q^{\lp-2r}$ when $c_0=0$, and $M=0$ when $c_0\neq0$. Choose one of the $n_{2r}$ alternating matrices $B$ of rank $2r$ and one of the $q^{2r}$ admissible vectors $\a_2$. Taking $c_0=0$ gives one codeword of weight $W_r^{-}$, while taking $c_0\neq0$ gives $q-1$ codewords of weight $W_r^{+}$. Thus, this case contributes $q^{2r}n_{2r}$ codewords of weight $W_r^{-}$ and $q^{2r}(q-1)n_{2r}$ codewords of weight $W_r^{+}$.

Now let $\a_1\neq0$. By Lemma~\ref{lem:NM}(3), the case $M=q^{\lp-2r}$ requires $U_0\subseteq V_0$. When $U_0\nsubseteq V_0$, we always have $M=q^{\lp-2r-1}$, which gives the weight $W_0^{\ast}$. So assume $U_0=\ker B\subseteq V_0=\ker\a_1$. By Lemma~\ref{lem:altcount}(3), there are $k_{2r}$ such $B$ for each of the $q^{\lp}-1$ choices of $\a_1$. Fix such a pair, fix one of the $q^{2r}$ admissible vectors $\a_2$, and let $u_2$ satisfy $Bu_2=-\a_2$. Since $U_0\subseteq V_0$, the value $\a_1u_2$ does not depend on the choice of $u_2$. So, by Lemma~\ref{lem:NM}(3), the condition $U\cap V\neq\emptyset$ reads $c_0=-\a_1u_2$. This condition holds for exactly one of the $q$ possible values of $c_0$ and fails for the remaining $q-1$ values. The unique value giving $M=q^{\lp-2r}$ produces the weight $W_r^{-}$, whereas the remaining $q-1$ values give $M=0$ and hence the weight $W_r^{+}$. Thus, this case contributes $(q^{\lp}-1)q^{2r}k_{2r}$ codewords of weight $W_r^{-}$ and $(q^{\lp}-1)q^{2r}(q-1)k_{2r}$ codewords of weight $W_r^{+}$.

Adding the two cases gives $A_{W_r^{-}}=q^{2r}\bigl(n_{2r}+(q^{\lp}-1)k_{2r}\bigr)$. In both cases, the number of codewords of weight $W_r^{+}$ is $(q-1)$ times that of weight $W_r^{-}$, since $c_0$ has $q-1$ possible nonzero values in the former case and a unique value in the latter. Hence $A_{W_r^{+}}=(q-1)A_{W_r^{-}}$. Finally, $\dim\mathcal F(2,m)=\binom{m}{2}$, so there are $q^{\binom m2}$ codewords in all. Then $A_{W_0^{\ast}}$ is what remains after removing, for each $r$, the $A_{W_r^{-}}+A_{W_r^{+}}=q\,A_{W_r^{-}}$ codewords of weight $W_r^{\pm}$.

For $r=0$, we have $B=0$, so $n_0=1$ and $k_0=\qbinom{\lp-1}{\lp}=0$. Hence $A_{W_0^{-}}=1$ and $A_{W_0^{+}}=q-1$, while $W_0^{-}=0$ and $W_0^{+}=q^{2\lp}$. These are the zero codeword and the $q-1$ nonzero constants.
\end{proof}

\begin{example}
Let $q=3$ and $\lp=3$, so $m=5$. Then $C^{\AA}(2,5)$ has length $3^{6}=729$ and $3^{10}=59049$ codewords. In this case, $r\in\{0,1\}$. For $r=0$, we get the weights $0$ and $729$, with multiplicities $1$ and $2$, respectively. For $r=1$, where all Gaussian binomial coefficients are evaluated at $q=3$, we get $s_1=2$, $n_2=\qbinom{3}{2}\cdot 2=13\cdot2=26$ and $k_2=\qbinom{2}{1}\cdot2=4\cdot2=8$. Hence $A_{W_1^{-}}=9\,(26+26\cdot8)=2106$ and $A_{W_1^{+}}=2\cdot2106=4212$, with $W_1^{-}=432$ and $W_1^{+}=513$. Finally, $A_{486}=59049-3(1+2106)=52728$. Thus, the five weights $0,432,486,513,729$, with multiplicities $1,2106,52728,4212,2$, account for all $59049$ codewords. 
\end{example} 

\begin{remark}
By Theorem~\ref{thm:count-2m}, $C^{\AA}(2,m)$ has exactly $2\lfloor m/2\rfloor+1$ distinct weights, compared with $\lfloor m/2\rfloor+1$ for the Grassmann code $C(2,m)$ itself \cite{Nogin1996}. Thus, passing to the affine cell gives approximately twice as many distinct weights. The weight formula \eqref{eq:wt-unified} obtained here is used in \cite{SinghYadav2026} to determine the weight spectrum of $C^{\AA}(3,6)$, the first affine Grassmann code with $\ell=3$. In that work, specialising one row of the generic matrix reduces a codeword of $C^{\AA}(\ell,m)$ to a codeword of $C^{\AA}(\ell-1,m-1)$, and it is precisely this formula that makes the reduction work.
\end{remark}

\subsection*{Acknowledgment}
The author would like to thank Prasant Singh for useful discussions on this problem.

\bibliographystyle{plain}
\bibliography{Version_2.bib}

\end{document}